\titleformat{\section}[hang]{\Large\bfseries\filright}{\thesection.}{.5em}{}
\titleformat{\subsection}[hang]{\large\bfseries\filright}{}{0em}{}
\titleformat{\subsubsection}[block]{\bfseries}{}{0em}{}
\theoremstyle{definition}
\newtheorem{question*}{Question}
\newcommand{\C}{\mathbb{C}}
\newcommand{\R}{\mathbb{R}}
\newtheorem{definition}{Definition}[section]
\newtheorem{theorem}[definition]{Theorem}
\newtheorem{lemma}[definition]{Lemma}
\newtheorem{corollary}[definition]{Corollary}
\theoremstyle{definition}
\newtheorem{remark}[definition]{Remark}
\newtheorem{example}[definition]{Example}
\begin{document}
\setcitestyle{square}
\title{An extension of Bravyi-Smolin's construction for UMEBS}

\author[Jeremy Levick]{Jeremy Levick}
\address{Institute for Quantum Computing and Department of Mathematics, University of Guelph,
Guelph, ON, Canada}
\email{levickje@uoguelph.ca }

\author[Mizanur Rahaman]{Mizanur Rahaman}
\address{BITS Pilani KK Birla Goa Campus, Goa India}
\email{mizanurr@goa.bits-pilani.ac.in}
\begin{abstract}
We extend Bravyi and Smolin's construction for obtaining  unextendible maximally entangled bases (UMEBs)  from equiangular lines. We show that equiangular real projections of rank more than 1 also exhibit examples of UMEBs. These projections arise in the context of optimal subspace packing in Grassmannian spaces. This generalization yields new examples of UMEBs in infinitely many dimensions of the underlying system. Consequently we find a set of orthogonal unitary bases for symmetric subspaces of matrices in odd dimensions. This finding validates a recent conjecture about the mixed unitary rank of the symmetric Werner-Holevo channel in infinitely many dimensions.
\end{abstract}
\maketitle
\section{Introduction and background}
A set  $S$ of orthogonal product states in a bipartite quantum system $\mathbb{C}^d\otimes\mathbb{C}^{d'}$ is called a UPB (unextendible product basis) if $S$ consists of fewer than $dd'$ vectors which have no product states orthogonal to to each element of the set. UPBs have many interesting properties; for example they exhibit nonlocality without entanglement (\citep{upb2}), they provide examples of bound entangled states (\citep{upb1})and indecomposable linear maps, (\citep{terhal}) and also scenarios where quantum correlations don't violate Bell inequalities (\citep{upb3}). Bravyi and Smolin (\citep{bravyi} generalized the idea of UPBs to unextendible maximally entangled bases (UMEBs). A UMEB  is a set $S$ of orthogonal maximally entangled vectors in  $\mathbb{C}^d\otimes\mathbb{C}^{d}$ consisting of fewer than $d^2$ vectors such that there is no maximally entangled vectors orthogonal to $S$. They showed that  UMEBs can be used to show the existence of states for which 1-copy entanglement of assistance (EoA) is strictly smaller than the asymptotic EoA. Since their introduction, there have been a number of papers constructing UMEBs and highlighting interesting properties of theirs (\citep{umeb1}, \citep{umeb2}, \citep{umeb3}). 

One construction that Bravyi-Smolin used to find examples of UMEBs makes use of a set of equiangular lines in $\mathbb{C}^3$.  In this paper we show that we can generalize this example and we produce examples of UMEBs for infinitely many dimensions. We make use of equiangular subspaces and the concept of subspace packing in Grassmannian spaces.

To begin, we introduce some notation we will use throughout: $\{e_i\}_{i = 1}^d \subseteq \C^d$ are the standard basis vectors in $\C^d$; the set of $m\times n$ complex matrices is $M_{m,n}(\C)$ and when $m=n$ we simply write $M_n(\C)$ or even $M_n$. For $A\in M_{n,m}(\C)$, $A^T$ is the transpose of $A$, and $A^*$ its Hermitian adjoint. Inner products are linear in the second argument as in the physicists' convention, so that $\langle u,v\rangle = u^*v$ where $u,v\in \C^d$ and we think of $\C^d \simeq M_{d,1}(\C)$. Given a matrix $A \in M_{m,n}(\C)$, $a_{ij}$ refers to the entries $\langle e_i,Ae_j\rangle = e_i^*Ae_j$ where $e_i\in \C^m$, $e_j \in \C^n$--that is, lower-case letters associated to a matrix designated by an upper-case matrix are the entries of $A$ unless specified otherwise. We note that in the final section, it becomes more convenient to index vectors and matrices starting from $0$, so that in this section (and this section alone!) we use the convention that $e_0 = (1,0,\cdots )^T$, and so forth, and label the entries of a matrix starting $h_{00}$ rather than $h_{11}$.

The central notion in this paper is that of an unextendible maximally entangled basis. To define these, we need the following: 
\begin{definition} A maximally entangled state in $\C^d \otimes \C^d$ is a vector $u$ such that $$u = \frac{1}{\sqrt{d}}(I\otimes U)\phi$$ for some unitary $U \in\mathcal{U}(d)$, and $\phi :=\sum_{i=1}^d e_i \otimes e_i$.
\end{definition}

It is useful to understand this by means of the Choi-Jamiolkowski isomorphism between $\C^d\otimes \C^d$ and $M_d(\C)$, which acts by 

$$j(u\otimes v) = uv^T$$ extended by linearity. So the tensor $u\otimes \bar{v}$ is mapped to the rank-1 matrix $uv^*$.

It is well-known that the action of $M_d$ on itself by left-multiplication can be represented by 

$$j^{-1}(AX) = (I\otimes A)j^{-1}(X);$$

if we observe that $j(\phi) = I_d$, then 
\begin{align*}(I\otimes U)\phi &=  (I\otimes U)j^{-1}(I_d) \\
& = j^{-1}(U).\end{align*}

That is, a maximally entangled state is simply one that satisfies $j(u) = \frac{1}{\sqrt{d}}U$ for some unitary $U \in \mathcal{U}(d)$. More directly, it is a vector obtained by stacking the columns of an order-$d$ unitary matrix on top of one another and rescaling to a unit vector. 

\begin{definition} A maximally entangled basis (MEB) is an orthonormal basis for some subspace of $V\subseteq \C^d\otimes \C^d$ composed entirely of maximally entangled vectors.
\end{definition}

Equivalently, the image of a MEB under the Choi-Jamiolkowski isomorphism is a set of unitary matrices that, after a common rescaling by $\frac{1}{\sqrt{d}}$, are all trace orthonormal. 

\begin{definition} An MEB $\{u_1,\cdots, u_k\} \subseteq \C^d\otimes \C^d$ is unextendible if there are no maximally entangled vectors in the orthogonal complement: there exists no maximally entangled $v$ such that $$\langle v,u_i\rangle = 0$$ for all $i$.
\end{definition}

Once again, we can push this definition forward into $M_d(\C)$ via $j$ to get that an unextendible maximally entangled basis (UMEB) is equivalent to a set of trace-orthonormal (up to rescaling by $\frac{1}{\sqrt{d}}$) unitaries $U_i$ such that there is no unitary $V \in \mathrm{span}\{U_1,\cdots, U_k\}^{\perp}$. 
\subsection{Contributions of this work}
In this work, we construct new examples of UMEBs in infinitely many dimensions of the underlying system. Thus we add more examples in this interesting topic (see \citep{umeb-new}, \citep{umeb1}, \citep{umeb3}). Our novel approach makes use of the optimal subspace packing in Grassmannian spaces to produce these examples. 
We prove the following result (see Theorem \ref{mainthm}):

\begin{theorem}
For each prime $p \equiv -1 \mod 8$, there exists a UMEB of cardinality $p(p+1)/2$ and the corresponding unitary matrices span the symmetric subspace in $M_p(\C)$.
\end{theorem}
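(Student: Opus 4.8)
The plan is to pass, via the Choi--Jamiolkowski map $j$ set up above, to an equivalent statement about matrices, and then to build the required matrices from real projections in the spirit of Bravyi--Smolin. A UMEB of cardinality $N$ in $\C^p\otimes\C^p$ whose image under $j$ spans the symmetric subspace $\mathcal S=\{A\in M_p(\C):A^{\t}=A\}$ is the same thing as a family $U_1,\dots,U_N$ of unitaries in $M_p(\C)$ with $\Tr(U_i^{*}U_j)=p\,\delta_{ij}$ and $\operatorname{span}\{U_1,\dots,U_N\}=\mathcal S$: indeed $j$ is linear, sends maximally entangled vectors to $\tfrac1{\sqrt p}$ times unitaries, and satisfies $\langle u_i,u_j\rangle=\tfrac1p\Tr(U_i^{*}U_j)$. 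Granting such a family, both assertions of the theorem follow at once. First, $N=\dim_\C\mathcal S=p(p+1)/2<p^{2}$, so this is a genuine proper MEB. Second, for unextendibility it suffices that $\mathcal S^{\perp}$ (the orthogonal complement in $M_p(\C)$ for the Frobenius inner product) contains no unitary; a short computation shows $\mathcal S^{\perp}$ is the antisymmetric subspace $\mathcal A=\{A:A^{\t}=-A\}$, and for $p$ odd every $A\in\mathcal A$ has $\det A=\det A^{\t}=\det(-A)=(-1)^{p}\det A=-\det A$, hence $\det A=0$, so no $A\in\mathcal A$ is invertible, let alone unitary. Thus the entire content of the theorem is the construction of a trace-orthonormal (after the common rescaling by $1/\sqrt p$) basis of $\mathcal S$ consisting of unitaries.

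To build such a basis I would follow the Bravyi--Smolin recipe but replace rank-one projections from equiangular lines with real projections of higher rank coming from optimal subspace packings in a real Grassmannian. If $P\in M_p(\R)$ is a projection then $U=I-2P$ is a real symmetric orthogonal matrix, hence a unitary lying in $\mathcal S$, and
\[
\Tr\!\big((I-2P)(I-2P')\big)=p-2\Tr P-2\Tr P'+4\Tr(PP'),
\]
so $I-2P$ and $I-2P'$ are trace-orthogonal exactly when $\Tr(PP')=\tfrac12(\Tr P+\Tr P')-\tfrac p4$. The target is therefore a configuration of real projections in $\R^{p}$ (possibly of several ranks, possibly supplemented by a few further explicitly written symmetric unitaries) whose pairwise trace overlaps all satisfy this identity and which has $p(p+1)/2$ members. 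As soon as we have that many trace-orthonormal unitaries inside $\mathcal S$ they are automatically linearly independent, and since $\dim_\C\mathcal S=p(p+1)/2$ they automatically span $\mathcal S$, so the spanning claim needs no separate argument.

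A concrete way to organize the search is to index coordinates by $\mathbb{F}_p$ and split $\mathcal S$ as the trace-orthogonal direct sum $\bigoplus_{c\in\mathbb{F}_p}\mathcal S_c$, where $\mathcal S_c$ consists of the symmetric matrices supported on the anti-diagonal $\{(x,y):x+y=c\}$; each $\mathcal S_c$ has dimension $(p+1)/2$, and a symmetric unitary in $\mathcal S_c$ is a phased reversal $W_f$ with $(W_f)_{x,c-x}=f(x)$ for a unimodular function $f$ with $f(x)=f(c-x)$, satisfying $\Tr(W_f^{*}W_g)=\sum_x\overline{f(x)}g(x)$. It then suffices, for each $c$ (and after a translation of coordinates one may take $c=0$, so that $f$ is an even function), to find an orthogonal basis of the $(p+1)/2$-dimensional space of even functions on $\mathbb{F}_p$ consisting of unimodular vectors of $\C^p$. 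The natural candidates are the quadratic chirps $x\mapsto\omega^{ax^{2}}$ with $\omega=e^{2\pi i/p}$, which are even and unimodular; their mutual overlaps are controlled by the quadratic Gauss sum and are nonzero, so one must pass to suitable combinations of these with the real $\pm1$-valued even functions. Showing that such combinations can be made genuinely orthogonal --- equivalently, verifying the trace-orthogonality of the resulting matrices --- is the main obstacle, and the hypothesis enters here: $p\equiv-1\pmod 8$ is precisely what makes $2$ a quadratic residue and $-1$ a non-residue modulo $p$, which is exactly what the relevant Gauss- and Jacobi-sum identities need for the cross terms to cancel. (For $p\equiv 3\pmod 8$, e.g.\ $p=3$, the analogous family provably fails, so arithmetic of this kind is unavoidable.)

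Finally, transporting the orthonormal unitary basis $\{U_i\}$ of $\mathcal S$ back through $j^{-1}$ produces the UMEB of cardinality $p(p+1)/2$, with the spanning statement built in. Moreover a trace-orthonormal family of $\dim\mathcal S$ unitaries spanning exactly $\mathcal S$ is, through the same correspondence, a minimal mixed-unitary (random-unitary) representation of the channel whose Choi operator is the normalized projection onto the symmetric subspace --- the symmetric Werner--Holevo channel in dimension $p$ --- and $\dim\mathcal S=p(p+1)/2$ is visibly the fewest unitaries any such representation can use. This confirms the conjectured mixed-unitary rank for every prime $p\equiv-1\pmod 8$.
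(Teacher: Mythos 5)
Your reduction is fine and matches the paper's framing: via the Choi--Jamiolkowski map the theorem is equivalent to exhibiting $p(p+1)/2$ trace-orthogonal unitaries spanning the symmetric subspace $S_p$, spanning is automatic once you have that many orthogonal unitaries inside $S_p$, and unextendibility follows from the determinant argument showing the antisymmetric subspace contains no unitary in odd dimension. But everything after that --- which is the entire content of the theorem --- is missing. The paper's proof rests on two concrete ingredients you do not supply: (i) the Calderbank--Hardin--Rains--Shor--Sloane optimal Grassmannian packing, which for prime $p\equiv -1 \bmod 8$ gives $p(p+1)/2$ \emph{real equiangular} projections of rank $r=(p-1)/2$ with $\mathrm{tr}(P_iP_j)=\frac{p^2-5}{4(p+2)}$, and (ii) Theorem \ref{thm-equiangular}, which shows that for $d=2r+1$ one can choose a genuinely complex phase $z$ with $\mathrm{Re}(z)=-1+\frac{2}{(p+1)^2}$ so that $U_i=I-(1-z)P_i$ are mutually trace-orthogonal unitaries. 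Your first sketch restricts to $z=-1$, i.e.\ real symmetric orthogonal matrices $I-2P$; the orthogonality condition you derive, $\mathrm{tr}(PP')=\tfrac12(\mathrm{tr}\,P+\mathrm{tr}\,P')-\tfrac{p}{4}$, equals $\frac{p^2-4}{4(p+2)}$ for two rank-$(p-1)/2$ projections, which is \emph{not} the value $\frac{p^2-5}{4(p+2)}$ forced on any maximal real equiangular family, so the known packings cannot be plugged in, and you exhibit no alternative configuration meeting your condition.

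Your second sketch (decomposing $S_p$ into anti-diagonal blocks $\mathcal S_c$ and seeking, in each block, an orthogonal basis of unimodular even functions on $\mathbb{F}_p$, e.g.\ from quadratic chirps) is a genuinely different strategy, but you concede yourself that the orthogonality of the proposed vectors ``is the main obstacle,'' and the appeal to Gauss/Jacobi-sum identities for $p\equiv-1\bmod 8$ is an unproven assertion, not an argument. Note also that this block-wise approach demands strictly more than the theorem does: it requires each $(p+1)/2$-dimensional block to admit a fully unimodular orthogonal basis, and this can fail even when a unitary basis of $S_d$ exists (for $d=3$ the block condition is impossible --- an orthogonality relation $\overline{f(0)}g(0)+2\overline{f(1)}g(1)=0$ cannot hold with both terms of moduli $1$ and $2$ --- yet Bravyi--Smolin give a unitary basis of $S_3$). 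So as written the proposal establishes the easy equivalences but not the existence statement; to complete it along the paper's lines you would invoke the equiangular rank-$(p-1)/2$ packings of Calderbank et al.\ together with the complex-phase construction of Theorem \ref{thm-equiangular}, and to complete it along your own lines you would have to actually produce and verify the unimodular orthogonal bases, which is a substantial open step, not a routine calculation.
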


We show that equiangular projections of rank more than 1 provide a set of UMEBs in dimensions $p \equiv -1 \mod 8$. We achieve this by generalizing a construction of Bravyi and Smolin who used equiangular lines to produce a set of UMEBs in $\mathbb{C}^3\otimes \mathbb{C}^3$. We prove that their construction can not be extended in higher dimensions if one only considers equiangular lines. Our construction is based on an interesting use of combinatorics and matrix theory and depends on a combinatorial result of Calderbank, Hardin, Rains, Shor and Sloane (\citep{shor-1}). To the best of our knowledge, the use of the optimal subspace packing in Grassmannian space in this topic is new and thus, our work opens up new avenues to explore. We further apply our result to establish that the mixed unitary rank of the symmetric Werner-Holevo channel is the same as the Choi rank. This result supports, in infinitely many dimensions, a conjecture of Girard et. al (see \citep{girard}) about the mixed unitary rank of this channel. 
\section{UMEBs from Equiangular lines}
\subsection{Revisiting the construction of Bravyi-Smolin:}
In \citep{bravyi}, a UMEB in $\C^3\otimes \C^3$ was constructed from a maximal set of real equiangular lines in $\R^3$. We briefly review the theory of real equiangular lines in $\R^d$.

\begin{definition} A set of vectors in $\mathbb{F}^d$ $\{v_1,\cdots, v_k\}$ are said to be equiangular if 

$$|\langle v_i,v_j\rangle| = \begin{cases} 1 & i = j \\ \beta & i\neq j \end{cases}$$ for some fixed constant $\beta$. 
\end{definition}

Following Godsil, \citep{godsil}, we can easily show that as long as $\beta < 1$, a set of equiangular lines must always yield a linearly independent set of rank-one projections, $\{v_iv_i^*\}_{i=1}^k$, which puts a limit on the number of mutually equiangular vectors, $k$. 

\begin{lemma}\label{independent} Let $\{v_i\}_{i=1}^k$ be a set of mutually equiangular vectors in $\mathbb{F}^d$ that are all distinct, i.e. $\beta := |\langle v_i,v_j\rangle|, i\neq j$ is strictly smaller than $1$. 
Then the rank-one projections $\{v_iv_i^*\}_{i=1}^k$ are linearly independent.
\end{lemma}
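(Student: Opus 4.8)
The plan is to show that the Gram matrix of the family $\{v_iv_i^*\}_{i=1}^k$ with respect to the Hilbert--Schmidt (trace) inner product on $M_d(\mathbb{F})$ is nonsingular; since a finite family of vectors in an inner product space is linearly independent precisely when its Gram matrix is invertible, this suffices. So the first step is simply to record that $\langle A,B\rangle_{\mathrm{HS}} := \Tr(A^*B)$ makes $M_d(\mathbb{F})$ into an inner product space, and to write down the Gram matrix $G$ with entries $g_{ij} = \Tr\bigl((v_iv_i^*)^*(v_jv_j^*)\bigr)$.

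Next I would carry out that computation. Using $(v_iv_i^*)^* = v_iv_i^*$ together with cyclicity of the trace, $g_{ij} = \Tr(v_iv_i^*v_jv_j^*) = (v_j^*v_i)(v_i^*v_j) = |\langle v_i,v_j\rangle|^2$. By the equiangularity hypothesis this equals $1$ when $i=j$ and $\beta^2$ when $i\neq j$, so $G = (1-\beta^2)\,I_k + \beta^2\,J_k$, where $J_k$ denotes the $k\times k$ all-ones matrix.

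The final step is to verify that $G$ is invertible. Since $J_k$ has eigenvalue $k$ on the all-ones vector and eigenvalue $0$ with multiplicity $k-1$, the eigenvalues of $G$ are $1-\beta^2+k\beta^2$ (once) and $1-\beta^2$ (with multiplicity $k-1$). Both are strictly positive exactly because $0\le\beta<1$, so $G$ is positive definite, hence nonsingular, and therefore $\{v_iv_i^*\}_{i=1}^k$ is linearly independent.

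I do not expect any genuine obstacle here; the one place requiring care is that the strict inequality $\beta<1$ is used essentially — it is precisely what makes the eigenvalue $1-\beta^2$ positive rather than zero, which is also why this lemma translates into a bound on $k$, as foreshadowed in the remark preceding the statement. An equivalent elementary phrasing, if one prefers to avoid naming the Gram matrix, is to suppose $\sum_i c_i\,v_iv_i^* = 0$, pair this identity with each $v_jv_j^*$ under $\langle\cdot,\cdot\rangle_{\mathrm{HS}}$ to obtain the linear system $Gc=0$, and then invoke the same invertibility of $G$ to conclude $c=0$.
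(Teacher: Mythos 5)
Your proof is correct. It differs from the paper's argument in mechanics, though both live in the same Hilbert--Schmidt framework: you compute the full $k\times k$ Gram matrix $G = (1-\beta^2)I_k + \beta^2 J_k$ and show it is positive definite by diagonalizing the all-ones part, whereas the paper sidesteps any spectral computation by exhibiting explicit ``dual'' elements $Y_j = v_jv_j^* - \beta^2 I_d$ in $M_d$ satisfying $\mathrm{tr}(Y_j\,v_iv_i^*) = (1-\beta^2)\delta_{ij}$, so that pairing a vanishing linear combination against $Y_j$ extracts each coefficient directly. The two are tightly linked — subtracting $\beta^2 I_d$ inside the trace is exactly what removes the rank-one $\beta^2 J_k$ contribution from the linear system $Gc=0$, which is why the paper never needs to invert $G$ — and indeed your closing remark (pair with each $v_jv_j^*$ and invoke invertibility of $G$) is essentially the paper's proof with the inversion made implicit. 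What your route buys is slightly more information (positive definiteness of $G$, hence a frame-type statement and the exact eigenvalues $1-\beta^2$ and $1+(k-1)\beta^2$); what the paper's route buys is brevity and the reusable $Y_j$ device, which it immediately redeploys in Lemma \ref{commonangle} and in the higher-rank setting with $Y_j = P_j - \tfrac{\beta}{r}I_d$. Your identification of $\beta<1$ as the essential hypothesis is exactly right and matches where the paper uses it.
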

\begin{proof}\citep{godsil} Define $Y_j = v_jv_j^*-\beta^2 I_d$; notice that 

\begin{equation} \mathrm{tr}(Y_j v_iv_i^*) = \begin{cases} \beta^2 - \beta^2 = 0 & i\neq j \\ 1 - \beta^2 & i=j\end{cases}.\end{equation} 
 
Suppose that $\sum_{i=1}^k x_i v_iv_i^* = 0$ and take the trace of each side against $Y_j$ for arbitrary $j$ to get

$$0 = x_j(1-\beta^2);$$ thus $x_j = 0$ and since $j$ is arbitrary, the projections are linearly independent. 
\end{proof}

If $\mathbb{F} = \R$, so that the vectors $v_i$ are all real, then $v_iv_i^* = v_iv_i^T$ is a symmetric matrix for each $i$ and $\{v_iv_i^*\}_{i=1}^k$ are a basis for a subspace of the symmetric subspace of $M_d(\C)$, $S_d$, which has dimension $\frac{d(d+1)}{2}$; thus $k \leq \frac{d(d+1)}{2}$. A set of real, equiangular lines of size $\frac{d(d+1)}{2}$ is thus said to be maximal. Equivalently, a maximal set of real equiangular lines is a set of real equiangular vectors whose associated rank-one projections span the symmetric subspace $S_d$. 

Maximal sets of real equiangular lines are known to exist in $d = 3$, $d=7$, and $d = 23$ which are the smallest dimensions in which such sets are possible ($d$ must be an odd integer, either $d=3$ or $d = m^2 -2$ for some integer $m$).

\begin{lemma}\label{commonangle} Let $\{v_i\}_{i=1}^{d(d+1)/2}$ be a maximal set of real equiangular lines in $\C^d$. Then the common angle $\beta = \frac{1}{\sqrt{d+2}}$.
\end{lemma}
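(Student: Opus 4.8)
The plan is to use the fact, established in the discussion just above, that for a maximal set the rank-one projections $P_i := v_iv_i^T$ are linearly independent (Lemma \ref{independent}) and, being $k := d(d+1)/2$ in number, form a basis of the symmetric subspace $S_d$. In particular $I_d$, being symmetric, lies in their span, so I may write $I_d = \sum_{i=1}^k c_i P_i$; the idea is to pin down the coefficients $c_i$ by taking traces against the $P_j$, and then to extract $\beta$ from the global trace identity $\mathrm{tr}(I_d) = d$.

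Concretely, I would first record the Gram matrix of the $P_i$ in the trace inner product: $G_{ij} = \mathrm{tr}(P_iP_j) = |\langle v_i,v_j\rangle|^2$, which is $1$ on the diagonal and $\beta^2$ off it, so $G = (1-\beta^2)I_k + \beta^2 J$ with $J$ the all-ones matrix. Pairing $I_d = \sum_i c_i P_i$ against $P_j$ gives $1 = \mathrm{tr}(P_j) = (Gc)_j$, i.e. $Gc = \mathbf{1}$. Since $\beta < 1$, the eigenvalues $1-\beta^2$ (multiplicity $k-1$) and $1+(k-1)\beta^2$ (multiplicity $1$) of $G$ are both positive, so $G$ is invertible; and as $\mathbf{1}$ is an eigenvector of both $I_k$ and $J$, the unique solution $c = G^{-1}\mathbf{1}$ is a scalar multiple of $\mathbf{1}$. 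Hence all $c_i$ equal the common value $c = \bigl(1+(k-1)\beta^2\bigr)^{-1}$.

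Taking the ordinary trace of $I_d = \sum_i c_i P_i$ then yields $d = \sum_i c_i = kc = k\bigl(1+(k-1)\beta^2\bigr)^{-1}$, so $(k-1)\beta^2 = \tfrac{k}{d} - 1 = \tfrac{k-d}{d}$ and $\beta^2 = \tfrac{k-d}{d(k-1)}$. Substituting $k = d(d+1)/2$, so that $k - d = d(d-1)/2$ and $k-1 = (d-1)(d+2)/2$, the factors of $d$ and $d-1$ cancel to give $\beta^2 = \tfrac{1}{d+2}$, i.e. $\beta = \tfrac{1}{\sqrt{d+2}}$, as claimed.

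This is in essence a routine double-counting argument, so I do not anticipate a real obstacle. The two points that warrant care are (i) invoking maximality at precisely the right place, namely to guarantee that $I_d$ lies in $\mathrm{span}\{P_i\}$ (without maximality the $P_i$ span only a proper subspace and the argument breaks), and (ii) justifying that $Gc = \mathbf{1}$ forces the $c_i$ to be \emph{equal} rather than merely to sum correctly — both of which are settled by the spectral decomposition of $G$ recorded above.
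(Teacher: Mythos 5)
Your proof is correct and follows essentially the same strategy as the paper: use maximality to write $I_d = \sum_i c_i v_iv_i^*$, show the coefficients are all equal, and then take the trace to solve for $\beta^2 = \frac{1}{d+2}$. The only (cosmetic) difference is in how you force the coefficients to be equal — you invert the Gram matrix $G=(1-\beta^2)I_k+\beta^2 J$ on the all-ones vector, whereas the paper pairs against the dual matrices $Y_j = v_jv_j^*-\beta^2 I_d$ already introduced in Lemma~\ref{independent}; both mechanisms rest on the same hypothesis $\beta<1$ and yield the same conclusion.
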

\begin{proof} The rank-one projections $\{v_iv_i^*\}$ span the symmetric subspace, and hence contain the identity in their span, i.e., there exist real numbers $x_i$ such that 

$$\sum_{i=1}^{d(d+1)/2} x_i v_iv_i^* = I_d.$$ Using $Y_j$ from Lemma \ref{independent}, we see that

$$x_j(1-\beta^2) = 1-d\beta^2$$ and so in fact we have that 

$$\frac{1-d\beta^2}{1-\beta^2} \sum_i v_iv_i^* = I_d.$$
Taking the trace of either side we get 

$$\frac{1-d\beta^2}{1-\beta^2}\frac{d(d+1)}{2} = d$$

which we can solve to get 

\begin{equation}\label{beta} \beta^2 = \frac{1}{d+2}.\end{equation}

\end{proof}

In $d=3$, the $6$ diagonals of a regular icosahedron form a maximal set of real equiangular lines, and Bravyi and Smolin construct a UMEB from this set in the following way:

\begin{lemma} Let $\{v_i\}_{i=1}^6$ be a set of real equiangular lines in $\C^3$. Then there exists $z \in \C$, $|z|=1$ such that $$U_i := I_3 - (1-z)v_iv_i^*$$ are mutually orthogonal unitary matrices.
\end{lemma}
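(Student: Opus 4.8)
The plan is to compute directly when the matrices $U_i = I_3 - (1-z)v_iv_i^*$ are unitary and when they are mutually trace-orthogonal, treating $z$ as an unknown on the unit circle to be solved for. Since each $v_iv_i^*$ is a rank-one orthogonal projection $P_i$, we have $U_i = I - (1-z)P_i$, and a one-line computation gives $U_iU_i^* = I - (1-z)P_i - (1-\bar z)P_i + |1-z|^2 P_i = I - \bigl(2 - z - \bar z - |1-z|^2\bigr)P_i$. Since $|1-z|^2 = 2 - z - \bar z$, the coefficient of $P_i$ vanishes identically, so $U_i$ is unitary for \emph{every} $z$ with $|z| = 1$; no constraint is imposed here. (Geometrically, $U_i$ acts as the identity on $P_i^\perp$ and as multiplication by $z$ on the line $P_i$, so it is a reflection-type unitary for any unimodular $z$.) Thus the entire content of the lemma is the orthogonality condition, and $z$ must be chosen to enforce it.

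Next I would impose trace-orthogonality: for $i \neq j$, expand
\[
\operatorname{tr}(U_i U_j^*) = \operatorname{tr}\bigl[(I - (1-z)P_i)(I - (1-\bar z)P_j)\bigr]
= d - (1-z)\operatorname{tr}(P_i) - (1-\bar z)\operatorname{tr}(P_j) + |1-z|^2 \operatorname{tr}(P_iP_j),
\]
with $d = 3$. Here $\operatorname{tr}(P_i) = \operatorname{tr}(P_j) = 1$ and, crucially, $\operatorname{tr}(P_iP_j) = |\langle v_i, v_j\rangle|^2 = \beta^2$ for $i \neq j$ by equiangularity. By Lemma \ref{commonangle}, for the maximal set in $d = 3$ we have $\beta^2 = \frac{1}{d+2} = \frac{1}{5}$. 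Substituting, $\operatorname{tr}(U_iU_j^*) = 3 - (1-z) - (1-\bar z) + \frac{1}{5}|1-z|^2 = 1 + (z + \bar z) - \frac{1}{5}(2 - z - \bar z) = \frac{3}{5} + \frac{6}{5}(z+\bar z)$. Setting this to zero forces $z + \bar z = -\tfrac12$, i.e. $\operatorname{Re}(z) = -\tfrac14$; combined with $|z| = 1$ this pins down $z = -\tfrac14 \pm \tfrac{\sqrt{15}}{4} i$, which exists. Choosing either root gives a valid $z$.

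I would close by noting two things. First, one should double-check that the $U_i$ are genuinely orthonormal after the common rescaling by $\tfrac{1}{\sqrt d}$ in the sense needed for an MEB, but since $\operatorname{tr}(U_iU_i^*) = 3$ automatically (as $U_i$ is unitary in $M_3$) and the off-diagonal traces are zero, the rescaled matrices $\tfrac{1}{\sqrt 3}U_i$ are indeed trace-orthonormal. Second, the six lines really do span the symmetric subspace $S_3$ (dimension $6$), so the span of $\{U_1,\dots,U_6\}$ is $I_3$ plus the span of the $P_i$, which is all of $S_3$; the actual numerical value $\beta^2 = 1/5$ is what makes the cancellation work, and this is exactly where the maximality of the equiangular set enters. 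The only mild subtlety — the ``main obstacle'' such as it is — is bookkeeping the real and imaginary parts correctly so that the single scalar equation $\operatorname{Re}(z) = -1/4$ is simultaneously consistent with $|z| = 1$; everything else is forced.
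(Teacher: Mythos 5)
Your overall strategy is exactly the paper's: unitarity is automatic for any unimodular $z$, and the whole content is the trace-orthogonality condition, which by equiangularity with $\beta^2 = \tfrac{1}{5}$ reduces to a single equation for $\operatorname{Re}(z)$ that must be checked to be compatible with $|z|=1$. However, your computation contains a sign error in the cross term. Expanding $U_iU_j^*$ gives the last term $(1-z)(1-\bar z)P_iP_j$, whose trace contributes $+\beta^2\lvert 1-z\rvert^2 = +\tfrac{1}{5}(2 - z - \bar z)$, not $-\tfrac{1}{5}(2-z-\bar z)$ as you wrote. The correct reduction is
\[
\operatorname{tr}(U_iU_j^*) = 3 + (z+\bar z - 2)\bigl(1-\beta^2\bigr) = \tfrac{7}{5} + \tfrac{4}{5}(z+\bar z),
\]
so setting this to zero forces $\operatorname{Re}(z) = -\tfrac{7}{8}$ (the value the paper obtains), not $-\tfrac{1}{4}$. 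Consequently the specific $z = -\tfrac14 \pm \tfrac{\sqrt{15}}{4}i$ you exhibit does \emph{not} make the $U_i$ mutually orthogonal.

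The damage is limited: the existence argument survives once the arithmetic is corrected, because $-\tfrac{7}{8}$ still lies in $[-1,1]$, so $z = -\tfrac78 \pm \tfrac{\sqrt{15}}{8}i$ works and the lemma holds. But as written, your verification that the required real part is achievable on the unit circle is performed on the wrong number, and the "$\operatorname{Re}(z)=-1/4$" conclusion would fail if one actually checked $\operatorname{tr}(U_iU_j^*)$ with that choice. Fix the sign, rerun the one-line computation, and the proof coincides with the paper's.
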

\begin{proof} Unitarity of $U_i$ follows from $|z|=1$, as

\begin{align*}U_i^*U_i &= I_3 - (1-z)v_iv_i^* - (1-\overline{z})v_iv_i^* + (2-z - \overline{z})v_iv_i^*\\ & = I_3 - v_iv_i^*(z + \overline{z} - 2 + 2 - z - \overline{z})\\
& = I_3.\end{align*}

Mutual orthogonality follows from the fact that $v_i$ are equiangular: 

\begin{align*} \mathrm{tr}(U_i^*U_j) & = \mathrm{tr}(I_3) - (1-z) - (1-\overline{z}) + \beta^2 (2-z -\overline{z}) \\
& = 3 +(z + \overline{z} - 2)(1-\beta^2);\end{align*}
set this equal to $0$ to get that 

\begin{equation} \mathrm{Re}(z) = -\frac{3}{2(1-\beta^2)}+1\end{equation} which is compatible with $|z|=1$ so long as the quantity on the right hand side of the equation has modulus less than $1$. Using Equation \ref{beta}  this becomes

$$-1 \leq -\frac{3}{2(1 - \frac{1}{5})} + 1 \leq 1$$ 
which is true, since the quantity in the middle evaluates to $-\frac{15}{8} + 1  = -\frac{7}{8}$. 
 
\end{proof}

\begin{theorem}\citep{bravyi} The unitaries $U_i$ constructed above are the image under the Choi-Jamiolkowski isomorphism of a UMEB. 
\end{theorem}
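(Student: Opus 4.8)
The plan is to reduce unextendibility to a statement about the orthogonal complement of $\operatorname{span}\{U_1,\dots,U_6\}$ inside $M_3(\C)$ equipped with the trace inner product $\langle A,B\rangle = \operatorname{tr}(A^*B)$. Recall from the preceding lemma that the rescaled matrices $\tfrac{1}{\sqrt3}U_i$ are trace-orthonormal unitaries, so their preimages under $j$ already form an MEB of size $6 < 9 = 3^2$; all that remains is to verify there is no maximally entangled vector orthogonal to all of them, i.e.\ (pushing the definition forward by $j$) that there is no unitary $V \in \operatorname{span}\{U_1,\dots,U_6\}^{\perp}$.

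The key step is to identify this span exactly. Each $U_i = I_3 - (1-z)v_iv_i^{*} = I_3 - (1-z)v_iv_i^{T}$ is a complex symmetric matrix, so $\operatorname{span}\{U_i\} \subseteq S_3$. On the other hand the $U_i$ are pairwise trace-orthogonal and nonzero, hence linearly independent; since there are $6 = \dim_{\C} S_3$ of them, they must form a basis of $S_3$. (Alternatively, one may invoke Lemma \ref{independent}: the $v_iv_i^{*}$ are linearly independent, hence already a basis of $S_3$, and since $I_3 \in S_3$ the affine substitution $v_iv_i^{*} \mapsto U_i$ does not shrink the span.) Therefore $\operatorname{span}\{U_1,\dots,U_6\}^{\perp} = S_3^{\perp}$, and under the trace inner product $S_3^{\perp}$ is precisely the space of antisymmetric matrices in $M_3(\C)$ (the decomposition $M_d(\C) = S_d \oplus A_d$ is orthogonal, and the dimensions add to $d^2$).

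Finally, no antisymmetric matrix in odd dimension can be unitary: if $B^{T} = -B$ then $\det B = \det B^{T} = (-1)^{3}\det B = -\det B$, so $\det B = 0$ and $B$ is singular, whereas a unitary has modulus-one determinant. Hence $\operatorname{span}\{U_i\}^{\perp}$ contains no unitary, so no maximally entangled vector is orthogonal to $\{u_i := j^{-1}(\tfrac{1}{\sqrt3}U_i)\}$, and this set is a UMEB.

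I expect the only real content — and the step most easily overlooked — to be the dimension count showing $\operatorname{span}\{U_i\} = S_3$: this is what forces the orthogonal complement to be exactly the antisymmetric matrices, and it is the reason the argument is so rigid (it succeeds precisely because the set of equiangular lines is \emph{maximal}, of size $\binom{d+1}{2}$, and because $d = 3$ is odd). The unitarity and trace-orthogonality computations are routine and already carried out in the earlier lemma, and the singularity of odd-dimensional antisymmetric matrices is standard.
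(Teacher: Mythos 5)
Your proof is correct and follows essentially the same route as the paper: identify $\operatorname{span}\{U_1,\dots,U_6\}$ with the symmetric subspace $S_3$, so the orthogonal complement is the antisymmetric matrices, and then rule out antisymmetric unitaries in odd dimension via the determinant identity $\det V = (-1)^3\det V$. The only difference is cosmetic — you spell out the dimension count (six trace-orthogonal, hence independent, symmetric matrices in a $6$-dimensional space) that the paper asserts in one line by saying the $U_i$ inherit the spanning property from the $v_iv_i^*$.
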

\begin{proof} Since $v_iv_i^*$ span the symmetric subspace of $M_3(\C)$, so do $U_i = I_3 - (1-z)v_iv_i^*$, thus we have $6$ orthogonal unitaries that span the symmetric subspace of $M_3$, so the orthogonal complement of their span is the antisymmetric subspace. But there are no unitaries in the antisymmetric subspace of $M_3(\C)$ since if $V^T = - V$, then $\mathrm{det}(V) = \mathrm{det}(V^T) = (-1)^3 \mathrm{det}(V)$ and thus $\mathrm{det}(V) = 0$, but a unitary cannot be singular. 
\end{proof}
\subsection*{Impossibility of generalizing the construction to dimensions bigger than 3:}
A natural question to ask is whether sets of maximal, real, equiangular lines in other odd dimensions also yield UMEBs? Our first result shows that the answer is no.
\begin{theorem} Let $\{v_i\}_{i=1}^{d(d+1)/2} \subseteq \C^d$ be a set of real, equaingular lines with common angle $\beta = \frac{1}{\sqrt{d+2}}$. We can find $z_i$, $|z_i|=1$ such that $U_i = I_d + (1-z_i)v_iv_i^*$ are unitaries and $\mathrm{tr}(U_i^*U_j) = 0$ for $i\neq j$ if and only if $d\leq 3$.
\end{theorem}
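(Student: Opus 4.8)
The plan is to mimic the $d=3$ calculation but keep $d$ general, derive the exact condition on $\operatorname{Re}(z_i)$ forced by trace-orthogonality, and then check when that condition is compatible with $|z_i|=1$. First I would compute $U_i^*U_i$ exactly as in the $d=3$ lemma: since $v_iv_i^*$ is a rank-one projection, $U_i^*U_i = I_d + (1-z_i) v_iv_i^* + (1-\overline{z_i})v_iv_i^* + |1-z_i|^2 v_iv_i^*$, and unitarity forces $(1-z_i)+(1-\overline{z_i})+|1-z_i|^2 = 0$, i.e. $2\operatorname{Re}(1-z_i) + |1-z_i|^2 = 0$, which is exactly the statement $|z_i|^2 = 1$. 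So unitarity is automatic from $|z_i|=1$ and places no further restriction, just as before. Next I would compute, for $i \neq j$, $\operatorname{tr}(U_i^*U_j) = d + (1-\overline{z_i})\operatorname{tr}(v_iv_i^*) + (1-z_j)\operatorname{tr}(v_jv_j^*) + (1-\overline{z_i})(1-z_j)\operatorname{tr}(v_iv_i^* v_jv_j^*)$. Using $\operatorname{tr}(v_iv_i^*)=1$ and $\operatorname{tr}(v_iv_i^*v_jv_j^*) = |\langle v_i,v_j\rangle|^2 = \beta^2$, this becomes $d + (1-\overline{z_i}) + (1-z_j) + \beta^2(1-\overline{z_i})(1-z_j)$.

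The key step is then to analyze when this last expression can vanish for every pair $i \neq j$ with each $z_i$ on the unit circle. Setting $w_i := 1 - z_i$, the vanishing condition reads $d + w_i + \overline{w_j}' + \dots$ — more cleanly, with $a_i := 1-\overline{z_i}$ and $b_j := 1 - z_j = \overline{a_j}$, we need $d + a_i + \overline{a_j} + \beta^2 a_i \overline{a_j} = 0$ for all $i\neq j$. I would argue this is a strong rigidity constraint: it says $\beta^2 a_i \overline{a_j} + a_i + \overline{a_j} + d = 0$, which (when $\beta^2 \neq 0$) factors as $\beta^{-2}(\beta^2 a_i + 1)(\beta^2 \overline{a_j}+1) = \beta^{-2} - d$, i.e. $(1 + \beta^2 a_i)(1 + \beta^2 \overline{a_j}) = 1 - d\beta^2$ for all $i \neq j$. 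If there are at least three indices (which holds once $d(d+1)/2 \geq 3$, i.e. $d \geq 2$), then setting $c_i := 1 + \beta^2 a_i$ we get $c_i\overline{c_j}$ is the same constant $\kappa := 1-d\beta^2$ for all $i\neq j$; comparing $c_1\overline{c_2}$, $c_1 \overline{c_3}$, $c_2\overline{c_3}$ forces $|c_i|$ all equal and the $c_i$ all equal to a common value $c$, whence $|c|^2 = \kappa = 1 - d\beta^2$ (note $\kappa>0$ since $\beta^2 = 1/(d+2)$ gives $1-d\beta^2 = 2/(d+2)>0$). So $c_i = c$ for all $i$, meaning all $z_i$ are equal; write $z := z_i$, $a := 1-\overline z$, $c = 1+\beta^2 a$.

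Now I would convert $|c|^2 = 1 - d\beta^2$ together with $|z|=1$ into a single numerical condition on $d$. Since $|z|=1$ we have $z + \overline z = 2 - (a + \overline a)$ and $|z|^2 = 1$ gives $\operatorname{Re}(a) = \tfrac12 |a|^2$ as above; meanwhile $|c|^2 = |1 + \beta^2 a|^2 = 1 + 2\beta^2\operatorname{Re}(a) + \beta^4 |a|^2 = 1 + \beta^2|a|^2 + \beta^4|a|^2 = 1 + \beta^2(1+\beta^2)|a|^2$. Setting this equal to $1 - d\beta^2$ yields $\beta^2(1+\beta^2)|a|^2 = -d\beta^2$, i.e. $|a|^2 = -d/(1+\beta^2) < 0$, which is impossible — so in fact I expect the corrected bookkeeping to show the single surviving case is degenerate ($a=0$, $z=1$, $U_i = I_d$, which are not orthogonal) unless $d$ is small enough that the "at least three indices" reduction fails, and then separately handle $d(d+1)/2 \le 2$, i.e. $d=1$, plus verify by the explicit computation in the preceding lemma that $d=3$ does work (there $k=6\ge 3$, so the resolution must be that for $d=3$ the common value of $\operatorname{Re}(z)$ actually lands in $[-1,1]$, and the sign I chose above, $U_i = I_d + (1-z_i)v_iv_i^*$ versus $I_d - (1-z)v_iv_i^*$, matters). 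The main obstacle — and the part I would be most careful with — is getting the signs and the reality/modulus bookkeeping exactly right so that the inequality $\lvert \operatorname{Re}(z)\rvert \le 1$ reduces to a clean polynomial inequality in $d$; the natural guess is that after substituting $\beta^2 = 1/(d+2)$ one obtains something like $\bigl\lvert 1 - \tfrac{d}{2(1-\beta^2)}\bigr\rvert \le 1$, i.e. $\tfrac{d(d+2)}{2(d+1)} \le 2$, which holds exactly for $d \le 3$, giving the stated dichotomy. I would finish by noting the "if" direction is already established by the Bravyi–Smolin construction for $d=3$ (and is trivial for $d=1$), and the "only if" direction is the contrapositive of the inequality analysis above.
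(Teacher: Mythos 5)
Your handling of the ``distinct $z_i$'' case is correct and in fact cleaner than the paper's: rewriting the orthogonality condition as $c_i\overline{c_j}=1-d\beta^2\neq 0$ for all $i\neq j$ (with $c_i$ an affine function of $z_i$) and using that there are at least three indices to force all $c_i$, hence all $z_i$, to coincide replaces the paper's page of real/imaginary-part manipulations with three lines. That reduction is a genuine improvement over the published argument.

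The gap is in the final, decisive step. The plus sign in the displayed statement is a typo (Bravyi--Smolin, and the paper's own proof, use $U_i=I_d-(1-z_i)v_iv_i^*$); with the literal plus sign $U_i$ has eigenvalue $2-z_i$ on $\mathrm{span}(v_i)$, so unitarity is $|2-z_i|=1$, not $|z_i|=1$ (your identity $2\,\mathrm{Re}(1-z_i)+|1-z_i|^2=|2-z_i|^2-1$, not $|z_i|^2-1$), and together with $|z_i|=1$ it forces $z_i=1$, $U_i=I_d$. You then combine the plus-sign orthogonality condition with $|z|=1$ and arrive at $|a|^2=-d/(1+\beta^2)<0$, i.e.\ a proof that \emph{no} $d$ admits such unitaries, contradicting the known $d=3$ construction; at that point you only ``guess'' that with the correct sign the condition becomes $\bigl|1-\tfrac{d}{2(1-\beta^2)}\bigr|\le 1$. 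Since the quantitative dichotomy is the whole content of the theorem, this computation must be carried out, not conjectured. With the minus sign your factorization goes through verbatim: set $c_i=1-\beta^2(1-z_i)$, so $\overline{c_i}c_j=1-d\beta^2$ for $i\neq j$, all $c_i$ equal some $c$ with $|c|^2=1-d\beta^2$, and with $|z|=1$ and $\beta^2=\tfrac{1}{d+2}$ this yields $\mathrm{Re}(z)=1-\tfrac{d(d+2)}{2(d+1)}$, which lies in $[-1,1]$ exactly when $d^2-2d-4\le 0$, i.e.\ $d\le 3$; conversely for $d\le 3$ such a $z$ exists, giving the ``if'' direction. So the inequality you guessed is the right one, but as written the proposal stops one essential computation short, and the computations it does contain prove a false statement.
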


\begin{proof} First, observe that unitarity of the $U_i$ so-defined is equivalent to $|z_i| = 1$. 
Next, suppose $z_i = z_j$ for distinct indices $i\neq j$. Call this common point on the circle $z$. Then we need 

$$\mathrm{tr}(U_i^*U_j) = d - (1-z) - (1-\overline{z}) + \frac{1}{d+2}(2-z - \overline{z}) = 0.$$

This rearranges to $d + 2(1-\mathrm{Re}(z))(\frac{1}{d+2} - 1) = 0$ or 

$$-\frac{d(d+2)}{2(d+1)} +1= \mathrm{Re}(z).$$

In order for this to be compatible with $|z| = 1$, we need 

$$-1 \leq \frac{-d(d+2) + 2(d+1)}{2(d+1)} \leq 1$$ or 

$$-2d - 2 \leq -d^2 +2 \leq 2d + 2.$$
The inequality on the left becomes $d^2 -2d - 4 \leq 0$ which has roots at $d =1 \pm \sqrt{5}$; the larger root is at $1 + \sqrt{5} \sim 3.23$, so the inequality is satisfied up to $d=3$, but no further. 

So, if it is possible to build $U_i$, each $z_i$ must be distinct. We now show that this is not possible either. Suppose $|z_i| = |z_j| = 1$, and $z_i = a_i + i b_i$, $z_j= a_j + i b_j$. 

Then the condition we want is 

$$\mathrm{tr}(U_i^*U_j) = d - (1-z_i) - (1-\overline{z_j}) + \frac{1}{d+2}(1 - z_i - \overline{z_j} + z_i\overline{z_j}) = 0.$$
Split this into real and imaginary parts to get 

\begin{align} d -2 + a_i + a_j + \frac{1}{d+2}(1 - a_i - a_j + a_ia_j + b_ib_j) & = 0 \\
b_i - b_j + \frac{1}{d+2}(-b_i + b_j + a_jb_i - a_ib_j) &= 0\label{Imag}.\end{align}
 
Equation \ref{Imag} simplifies to 

$$b_i(d+1 + a_j) - b_j(d+1 + a_i) = 0$$ and it is clear that $a_j = -(d+1)$ is not possible for a modulus-$1$ number, so if $a_i = a_j$, we must have $b_i = b_j$ as well, which we have just shown is not possible. Thus $z_i, z_j$ must have distinct real parts. 

Now, we use the fact that $b_i^2 = 1 - a_i^2$ and similarly for $b_j$ to get that 

$$(1-a_i^2)(d+1 + a_j)^2 = (1-a_j^2)(d+1+a_i)^2$$
so that 

$$ 2(d+1)a_j + a_j^2 - a_i^2(d+1)^2 -2(d+1)a_i^2a_j  =  2(d+1)a_i + a_i^2 - a_j^2 (d+1)^2 - 2a_j^2 (d+1)a_i$$

from which 

$$(a_j-a_i)\bigl[2(d+1) + (a_j + a_i) + (d+1)^2(a_j+a_i) + 2(d+1)a_ia_j\bigr] = 0.$$

We have already seen that $a_j = a_i$ is not allowed, so it must be that 

$$2(d+1) + (a_i+a_j)(1 + (d+1)^2) + 2(d+1)a_ia_j = 0$$ which we can solve for $a_i$ in terms of $a_j$ as 

$$a_i = - \frac{2(d+1) + a_j(1+(d+1)^2)}{(1 + (d+1)^2 + 2(d+1)a_j)}.$$
The problem is  that this fixes each $a_i$ for each $i\neq j$ to be the same, so once again, this is not possible. 
\end{proof}

\section{UMEBs from Equiangular Subspaces}
In this section we consider the next natural generalization of the Bravyi-Smolin example by going from rank-one projections $\{u_iu_i^*\}$ arising from equiangular lines, to higher rank projections $\{P_i\}$ satisfying some sort of equiangularity condition.
In general, there is no restriction on the ranks of the $P_i$, but it is easiest to assume that $\mathrm{rank}(P_i) = \mathrm{tr}(P_i) = r$ for all $i$. Under this assumption, we can do a similar analysis as in the case of equiangular lines (when $r=1$). Equiangular projections of higher ranks naturally occur in the packing of subspaces in Grassmannian spaces. In short, there is an equivalence between projections in $M_d(\C)$ and subspaces $V_i \subseteq \C^d$. Given $r$-dimensional subspaces $\{V_i\}$, each corresponds to a point in the Grassmannian $G(r,d)$, and the chordal distance between two points in the Grassmannian corresponds to the quantity
$$d_C^2(V_i,V_j) = \frac{1}{2}\|P_i - P_j\|_F^2  = r - \mathrm{tr}(P_i^*P_j)$$ where $P_i$ is the projection onto the subspace $V_i$. Thus, equidistant points in the Grassmannian correspond to equiangular projections, and maximal packings of equidistant points in the Grassmannian correspond to maximal sets of equiangular projections. For more exposition on this topic, see \citep{shor-1}
\citep{bella}, \citep{bodmann}.
\begin{definition} A set of projections $\{P_i\}_{i=1}^k \subseteq M_d(\C)$ are said to be equiangular if $\mathrm{tr}(P_iP_j) = \beta$ for some fixed $\beta$ and all $i\neq j$.
\end{definition}
\begin{lemma} A set of common-rank (rank $r$), equiangular projections $\{P_i\}_{i=1}^k$ is linearly independent as long as the common angle satisfies $\beta \neq r$.
\end{lemma}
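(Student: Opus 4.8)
The plan is to mimic the proof of Lemma~\ref{independent} almost verbatim, replacing the rank-one projections $v_iv_i^*$ by the general rank-$r$ projections $P_i$ and replacing the annihilating matrices $Y_j = v_jv_j^* - \beta^2 I_d$ by a suitably rescaled analogue. Concretely, I would set
\[
Y_j := P_j - \frac{\beta}{r}\, I_d,
\]
and compute $\mathrm{tr}(Y_j P_i)$ in the two cases $i\neq j$ and $i = j$. For $i \neq j$, the equiangularity hypothesis $\mathrm{tr}(P_iP_j) = \beta$ together with $\mathrm{tr}(P_i) = r$ gives $\mathrm{tr}(Y_j P_i) = \beta - \frac{\beta}{r}\cdot r = 0$. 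For $i = j$, idempotency $P_j^2 = P_j$ and $\mathrm{tr}(P_j) = r$ give $\mathrm{tr}(Y_j P_j) = r - \beta$. (One small observation worth recording: $\mathrm{tr}(P_iP_j)$ is automatically real since $P_i,P_j$ are Hermitian, so $\beta \in \R$ and all the manipulations stay within the reals; this is purely cosmetic.)

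With these two trace identities in hand the conclusion is immediate. If $\sum_{i=1}^k x_i P_i = 0$, then taking the Hilbert--Schmidt inner product of both sides against $Y_j$ yields $0 = \sum_i x_i \,\mathrm{tr}(Y_j P_i) = x_j (r - \beta)$, and since $\beta \neq r$ we get $x_j = 0$; as $j$ was arbitrary, $\{P_i\}$ is linearly independent. The only real choice in the argument is the shift constant $\beta/r$, and it is forced: it is exactly the value that makes the off-diagonal traces vanish, so there is no genuine obstacle to overcome here — the lemma is a routine generalization of the rank-one case.

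As a sanity check I would confirm consistency with $r = 1$: there $P_i = v_iv_i^*$ with $\|v_i\| = 1$, so $\beta = \mathrm{tr}(P_iP_j) = |\langle v_i,v_j\rangle|^2$, which is the square of the angle appearing in Lemma~\ref{independent}, and the hypothesis $\beta \neq r$ becomes $|\langle v_i,v_j\rangle| \neq 1$, precisely the hypothesis there. One could also remark that for genuinely distinct common-rank projections Cauchy--Schwarz forces $\mathrm{tr}(P_iP_j) < r$, so the hypothesis $\beta \neq r$ is essentially automatic; but it is cleanest to keep it stated explicitly as above.
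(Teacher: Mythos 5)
Your proposal is correct and follows essentially the same route as the paper: the paper also sets $Y_j = P_j - \frac{\beta}{r} I_d$, verifies $\mathrm{tr}(Y_jP_i) = 0$ for $i\neq j$ and $r-\beta$ for $i=j$, and pairs a vanishing linear combination against $Y_j$ to force each coefficient to zero. Your added remarks (reality of $\beta$, the $r=1$ consistency check, the Cauchy--Schwarz observation) are fine but not needed.
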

\begin{proof} As before, we set $Y_j = P_j - \frac{\beta}{r} I_d$, and observe that 

$$\mathrm{tr}(Y_jP_i) = \begin{cases} \beta - \beta = 0 & i\neq j \\
r - \beta & i = j\end{cases}.$$

Thus, if $\sum_{i=1}^k x_i P_i = 0$, taking the trace against $Y_j$ for any $j$ gives $(r-\beta)x_j = 0$ and so each $x_j = 0$.
\end{proof}

\begin{lemma} Suppose $P_i$ are real, equiangular projections in $M_d(\C)$. If the projections span the symmetric subspace,  then $\beta = r\frac{r d + r - 2}{(d+2)(d-1)}$.
\end{lemma}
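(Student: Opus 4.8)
The plan is to mimic the proof of Lemma~\ref{commonangle} verbatim, replacing the rank-one projections by the rank-$r$ ones. Since the $P_i$ are real and self-adjoint they are symmetric, and by the previous lemma they are linearly independent (which applies: distinct equiangular projections cannot have $\beta = r$). Hence if they span the symmetric subspace $S_d$ there are exactly $k = d(d+1)/2$ of them and $I_d$ lies in their span, so $\sum_{i} x_i P_i = I_d$ for some real scalars $x_i$.

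First I would pair this identity with the ``dual'' matrices $Y_j = P_j - \frac{\beta}{r} I_d$ from the linear-independence lemma. We have $\mathrm{tr}(Y_j P_i) = 0$ for $i \neq j$ and $\mathrm{tr}(Y_j P_j) = r - \beta$, while $\mathrm{tr}(Y_j I_d) = r - \beta d/r$. Taking the trace of $\sum_i x_i P_i = I_d$ against $Y_j$ therefore gives
$$x_j (r - \beta) = r - \frac{\beta d}{r},$$
so every $x_j$ equals the same constant $c = \dfrac{r^2 - \beta d}{r(r-\beta)}$. Thus $c \sum_i P_i = I_d$; since $\sum_i P_i$ has positive trace, $c \neq 0$ and $\sum_i P_i = c^{-1} I_d$. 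Now take traces: the left side is $kr = \frac{d(d+1)}{2} r$ and the right side is $c^{-1} d = \dfrac{r(r-\beta) d}{r^2 - \beta d}$. Cancelling $rd$, cross-multiplying, and collecting $\beta$ using $d(d+1) - 2 = (d+2)(d-1)$ yields
$$\beta = \frac{r\bigl((d+1)r - 2\bigr)}{(d+2)(d-1)} = \frac{r(rd + r - 2)}{(d+2)(d-1)},$$
as claimed.

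There is essentially no obstacle; the only points to watch are that $\beta \neq r$ (needed for both the independence of the $P_i$ and for $Y_j$ to behave as above), that the denominator $r^2 - \beta d$ cannot vanish (automatic: otherwise $c$ would be undefined even though $c\sum_i P_i = I_d$), and that one consistently uses $\mathrm{tr}(P_j) = r$ rather than $1$. As a sanity check, setting $r = 1$ recovers $\beta = \frac{d-1}{(d+2)(d-1)} = \frac{1}{d+2}$, which agrees with Lemma~\ref{commonangle} (where $\beta$ there denotes the quantity called $\beta^{1/2}$ in this notation).
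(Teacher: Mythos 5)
Your proof is correct and follows essentially the same route as the paper: pair $\sum_i x_i P_i = I_d$ with $Y_j = P_j - \tfrac{\beta}{r}I_d$ to force all $x_j$ equal, then take traces of $c\sum_i P_i = I_d$ using $k = d(d+1)/2$ and $\mathrm{tr}(P_i) = r$ to solve for $\beta$. The only quibble is the parenthetical about $r^2-\beta d$: if it vanished, $c$ would simply be $0$ (not undefined), which is ruled out exactly as you already argue from $c\sum_i P_i = I_d$ having positive trace.
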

\begin{proof} The previous Lemma shows that there can be at most $d(d+1)/2$ real, equiangular projections as long as the projections are pairwise distinct. If we saturate this inequality, the $P_i$ must span the symmetric subspace, and hence for some real $x_i$, we have 

$$\sum_i x_i P_i = I_d.$$ 
Taking the trace against $Y_j$ we find that 

$$x_j(r - \beta) = r - d\frac{\beta}{r}$$ for each $x_j$. 

Thus $$\frac{r^2 -d\beta}{r(r-\beta)}\sum_i P_i = I_d$$ and taking traces, 

$$\frac{r^2 - d\beta}{r-\beta}\frac{(d+1)}{2} = 1$$ and so 

$$\beta = r\frac{r d + r - 2}{(d+2)(d-1)}.$$

\end{proof}

Our idea, then, is to mimic the Bravyi-Smolin construction, but using higher-rank real, equiangular projections instead of rank-ones. That is, we wish to construct unitaries 
\begin{equation} U_i = I_d - (1-z_i)P_i\end{equation} with $|z_i| = 1$ and $P_i$ a set of $\frac{d(d+1)}{2}$ real, equiangular projections in $M_d(\C)$ so that $\mathrm{tr}(U_i^*U_j) = 0$. 

\begin{lemma}\label{duality} Suppose there exists a set of $\frac{d(d+1)}{2}$ real, equiangular projections $P_i$ of rank $r$ in $M_d(\C)$ spanning the symmetric subspace. Then there also exists projections $Q_i$ of rank $r' = d-r$ satisfying the same conditions. Moreover, if we can construct unitaries $U_i = I_d - (1-z)P_i$ that are mutually orthogonal, we can do the same for $Q_i$.
\end{lemma}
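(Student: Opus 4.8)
The plan is to take the orthocomplementary projections $Q_i := I_d - P_i$ and check that they inherit every required property, with the \emph{same} phase $z$ doing the job in the ``moreover'' part.

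First I would record the easy structural facts. Each $Q_i$ is a polynomial in the real symmetric matrix $P_i$, hence a real projection, and $\mathrm{rank}(Q_i) = d - \mathrm{tr}(P_i) = d - r = r'$; the $Q_i$ are pairwise distinct since $Q_i = Q_j$ would force $P_i = P_j$. (This implicitly uses $r < d$: a rank-$d$ projection in $M_d(\C)$ is $I_d$, so $r=d$ would make all $d(d+1)/2$ of the $P_i$ equal, contradicting that they span the $\tfrac{d(d+1)}{2}$-dimensional space $S_d$; in particular $r' \geq 1$.) Equiangularity is a one-line expansion: for $i\neq j$, $\mathrm{tr}(Q_iQ_j) = \mathrm{tr}\bigl((I_d - P_i)(I_d - P_j)\bigr) = d - 2r + \beta =: \beta'$, a constant. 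The bookkeeping identity I want to isolate, and which drives the whole lemma, is
\[ r' - \beta' = (d-r) - (d - 2r + \beta) = r - \beta. \]

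Next I would show $\{Q_i\}$ spans $S_d$. Since the $P_i$ span $S_d \ni I_d$, there are reals $x_i$ with $\sum_i x_i P_i = I_d$; taking traces gives $\sum_i x_i = d/r$, whence $\sum_i x_i Q_i = (d/r - 1)\,I_d = (r'/r)\,I_d$. Because $r' \geq 1$ this exhibits $I_d \in \mathrm{span}\{Q_i\}$, and therefore $P_i = I_d - Q_i \in \mathrm{span}\{Q_i\}$ for every $i$, so $S_d = \mathrm{span}\{P_i\} \subseteq \mathrm{span}\{Q_i\} \subseteq S_d$. As there are exactly $\dim S_d = d(d+1)/2$ of the $Q_i$, they form a basis of $S_d$; in particular they are linearly independent and of cardinality $d(d+1)/2$, so all the hypotheses hold with $r$ replaced by $r'$.

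For the ``moreover'' statement I would set $V_i := I_d - (1-z)Q_i$ with the same $z$ used for the $U_i$. Unitarity is the identical computation as for the $U_i$: using $Q_i^2 = Q_i$ and $|z|=1$ one gets $V_i^*V_i = I_d + Q_i\bigl[(1-z)(1-\bar z) - (2 - z - \bar z)\bigr] = I_d$. For orthogonality, the same expansion yields, for $i\neq j$, $\mathrm{tr}(V_i^*V_j) = d - (r' - \beta')(2 - z - \bar z)$, and by the identity above $r' - \beta' = r - \beta$, so this is exactly $\mathrm{tr}(U_i^*U_j) = 0$. Thus no new feasibility condition on $z$ arises. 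I expect no serious obstacle here: the only delicate points are the identity $r' - \beta' = r - \beta$ (without which one would have to re-check that the forced value of $\mathrm{Re}(z)$ still lies in $[-1,1]$) and excluding the degenerate rank $r=d$, both handled above.
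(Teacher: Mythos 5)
Your proposal is correct and follows essentially the same route as the paper: pass to $Q_i = I_d - P_i$, observe the invariant $r' - \beta' = r - \beta$, and note that this makes the same $z$ satisfy the orthogonality condition. The only cosmetic difference is the spanning step, where the paper invokes linear independence of equiangular projections (using $\beta' \neq r'$) while you exhibit $I_d$ and hence each $P_i$ directly in $\mathrm{span}\{Q_i\}$; both are fine, and your explicit exclusion of the degenerate case $r = d$ is a harmless extra.
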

\begin{proof} Define $Q_i = I_d - P_i$; these are clearly projections of rank $r'$ and we can check that $\mathrm{tr}(Q_iQ_j) = d - 2r + \beta = (d-r) + (\beta - r) = r' + (\beta - r)$. 
Define $\beta' = r' + \beta - r$ so that $\beta' - r' = \beta - r$; since $\beta - r \neq 0$, neither is $\beta' - r'$, so these are linearly independent real projections by Lemma \ref{independent} and so must also span the symmetric subspace.

Now, suppose that $\mathrm{tr}(U_i^*U_j) = d - r(1-z) -r (1-\overline{z}) + \beta (2 - z - \overline{z}) = 0.$ 

Then this becomes $d +(\beta - r)(2- z - \overline{z}) = 0$ and so this has a solution so long as we can find $|z|=1$ such that $\beta - r = - \frac{d}{2(1-\mathrm{Re}(z))}$. But because $\beta' - r' = \beta -r$ any $z$  that works for the projections $P_i$ will also work for the dual projections $Q_i$.
\end{proof}

\begin{theorem} {\label{thm-equiangular}}
Let $\{P_i\}_{i=1}^{d(d+1)/2}$ be a set of real, equiangular projections in $M_d(\C)$ spanning $S_d$, and thus necessarily satisfying 

$$\mathrm{tr}(P_iP_j) = r\frac{rd + r - 2}{(d+2)(d-1)}.$$ 
Then there exists a $z$ such that $$U_i = I_d - (1-z)P_i$$ is unitary and $\mathrm{tr}(U_i^*U_j) = 0$ for all distinct $i,j$ if and only if $2r - 1 \leq d \leq 2r + 1$. 
\end{theorem}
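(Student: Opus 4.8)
The plan is to collapse the whole statement to a single scalar equation for $\mathrm{Re}(z)$ and then to an elementary polynomial inequality in $d$ and $r$. The first step is two short matrix computations. Using $P_i^2=P_i$ one finds $U_i^*U_i=I_d+(|z|^2-1)P_i$, so every $U_i$ is unitary exactly when $|z|=1$; assume this from now on. Expanding $U_i^*U_j$ for $i\ne j$, taking the trace with $\mathrm{tr}(P_i)=r$ and $\mathrm{tr}(P_iP_j)=\beta$, and using $(1-\bar z)(1-z)=2-z-\bar z$ (valid since $|z|=1$) one obtains the pair-independent expression
\begin{equation*}
\mathrm{tr}(U_i^*U_j)=d-2(r-\beta)\bigl(1-\mathrm{Re}(z)\bigr).
\end{equation*}
Hence all the off-diagonal conditions $\mathrm{tr}(U_i^*U_j)=0$ hold simultaneously iff $\mathrm{Re}(z)=1-\dfrac{d}{2(r-\beta)}$, and a unit-modulus $z$ with this real part exists iff the right-hand side lies in $[-1,1]$.

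The second step is to substitute the prescribed value $\beta=r\frac{rd+r-2}{(d+2)(d-1)}$ and simplify. A direct calculation gives the factorization
\begin{equation*}
r-\beta=\frac{r(d+1)(d-r)}{(d+2)(d-1)},
\end{equation*}
which is strictly positive for integers with $1\le r\le d-1$, $d\ge2$. Consequently $1-\frac{d}{2(r-\beta)}<1$ automatically, so the only binding constraint is $1-\frac{d}{2(r-\beta)}\ge-1$, i.e. $d\le4(r-\beta)$. Clearing denominators turns this into
\begin{equation*}
f(d,r):=d(d-1)(d+2)-4r(d+1)(d-r)\le 0 .
\end{equation*}

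It remains to show that, for integers with $1\le r\le d-1$, one has $f(d,r)\le0$ exactly when $2r-1\le d\le2r+1$. Treating $f$ as a monic cubic in $d$, I would evaluate it at $d=2r-2,\,2r-1,\,2r+1,\,2r+2$ and find that it is positive at the outer two values and non-positive at the inner two, which places one root of $f$ in $(2r-2,2r-1]$ and one in $(2r+1,2r+2)$ — in particular both are positive. The constant term of $f$ shows that the product of its three roots equals $-4r^2<0$, so the third root must be negative. Therefore, for a positive integer $d$, the cubic $f(d,r)$ is non-positive iff $d$ lies between the two positive roots, i.e. iff $d\in\{2r-1,2r,2r+1\}$. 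Combining this with the first two steps gives both directions: if $2r-1\le d\le2r+1$ then $f(d,r)\le0$ and the required $z$ exists, while the existence of such a $z$ forces $f(d,r)\le0$ and hence $2r-1\le d\le2r+1$.

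I expect the only point needing real care to be this last argument: one must rule out any solution of $f(d,r)\le0$ outside the window $[2r-1,2r+1]$, and this is precisely what the sign of the product of the roots accomplishes — it pins the third root below zero, so no admissible dimension $d\ge2$ can satisfy the inequality other than $2r-1$, $2r$, $2r+1$. Everything else — expanding $U_i^*U_j$, verifying the identity for $r-\beta$, and the numerical values of the cubic at the four test points — is routine bookkeeping.
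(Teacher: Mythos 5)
Your proposal is correct and follows essentially the same route as the paper: reduce trace-orthogonality to a single condition on $\mathrm{Re}(z)$, clear denominators to the cubic inequality $d(d-1)(d+2)\le 4r(d+1)(d-r)$, and locate the cubic's roots by sign evaluations at $d=2r-2,\,2r-1,\,2r+1,\,2r+2$. The only cosmetic differences are that you pin the negative root via Vieta (product of roots $=-4r^2<0$) where the paper uses $p(0)=4r^2>0$ together with the behavior as $d\to-\infty$, and you evaluate the outer test points directly where the paper invokes the $r\mapsto d-r$ symmetry.
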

\begin{proof} As before, unitarity of the $U_i$ is equivalent to $|z| = 1$. So we simply check trace-orthogonality:

$$ \mathrm{tr}(U_i^*U_j) = d - (1-z)r - (1-\overline{z})r + r\frac{rd + r - 2}{(d+2)(d-1)}(2 - z - \overline{z})$$ so setting this to zero we get

$$0  = d - r(2 -z - \overline{z})\bigl(1-\frac{rd+r-2}{(d+2)(d-1)}\bigr)$$ which rearranges to

$$1-\mathrm{Re}(z)  = \frac{d(d+2)(d-1)}{2r(d+1)(d-r)}.$$
 
 Thus, $\mathrm{Re}(z) = \frac{2r(d+1)(d-r) - d(d+2)(d-1)}{2r(d+1)(d-r)}$ and this quantity must have modulus less than or equal to $1$, hence 
 
 $$-2r(d+1)(d-r) \leq 2r(d+1)(d-r) - d(d+2)(d-1) \leq 2r(d+1)(d-r).$$
 The right-hand inequality is trivially satisfied, so we need
 
 \begin{equation}\label{Ineq} d(d+2)(d-1) \leq  4r(d+1)(d-r).\end{equation}
 
 We will show now that this inequality only has positive integer solutions for rank/dimension pairs $(r,d)$ when $d \in \left[2r-1,2r+1\right]$. We will do this by expanding out the factors in Inequality \ref{Ineq} to get an inequality for the cubic polynomial in $d$ for any fixed $r$: 
 $$p(d) = d^3 + d^2(1-4r) + d(4r^2 -4r -2) + 4r^2 \leq 0.$$
 
Since $p(d)$ is cubic, it has at most three roots. We will find all three roots, and so know exactly the regions on which $p(d)\leq 0$ and hence \ref{Ineq} is satisfied. 

First, observe that $\lim_{d\rightarrow -\infty} = -\infty \leq 0$, while at $d=0$ we get $p(0) = 4r^2 \geq 0$, so one root, $a(r)$, lies in $(-\infty,0)$. 
 
Next, we show that at $d = 2r-2$ the inequality fails, while at $d = 2r-1$ it is satisfied:
When $d = 2r-2$, $4r = 2(d+2)$, and $d - r = \frac{d-2}{2}$. Thus, Inequality \ref{Ineq} becomes

$$d(d+2)(d-1) \leq 2(d+2)(d+1)\frac{(d-2)}{2}$$ or 

$$(d+2)\bigl[ d^2 -d - (d^2 -d -2)\bigr] = 2(d+2) \leq 0;$$ but for $r>0$, $d=2r-2 > -2$ so this cannot hold. Thus $p(d) > 0$ at $d = 2r-2$.

On the other hand, if $d = 2r-1$ for $r>0$, we get that $4r = 2(d+1)$ and $d-r = \frac{d-1}{2}$. 

Thus we analyze 
$$d(d-1)(d+2) \leq 2(d+1)^2 \frac{(d-1)}{2}$$ or 

$$(d-1)\bigl[d^2 + 2d - (d+1)^2\bigr] = -(d-1)\leq 0;$$ for $r>0$ and $d =2r-1$ this is always true. So $p(d) \leq 0$ for $d = 2r-1$. Thus, there is a root $b(r)$ in $\left(2r-2,2r-1\right)$. 

Finally, we observe that the left hand side of Inequality \ref{Ineq} is independent of $r$, while the right hand side is symmetric under mapping $r \mapsto d-r$, so if $p(d) \leq 0$ for fixed $r$, the same is true for $d-r$, and similarly for if $p(d)\geq 0$. Thus $p(d) \geq 0$ for $d = 2(d-r) -2$ which rearranges to $d = 2r+2$, and $p(d) \leq 0$ for $d = 2(d-r) -1$, or $d = 2r+1$. So a third root $c(r)$ lies in $\left(2r+1,2r+2\right)$. Note that this is essentially an observation about the dual projections: a rank/dimension pair $(r,d)$ admits projections with the desired properties if and only if the pair $(d-r,d)$ does as well. 

This exhausts all possible roots of $p(d)$, so we have that $p(d) \leq 0$ for $d \in \left(-\infty,a(r)\right) \cup \left(b(r),c(r)\right)$, and the intersection of this with the set of positive integers is just $\{2r-1,2r,2r+1\}$. Thus for fixed $r>0$, these are the only values of $d$ satisfying Inequality \ref{Ineq}.
\end{proof}

\begin{remark} Setting $r=1$, we see more clearly why only $d=3$ admits unitaries built from a maximal set of equiangular lines: the previous Theorem shows that this will only work if $1 \leq d \leq 3$. 
\end{remark}

\begin{remark} We can calculate $z$ in each of the three allowable cases, for a fixed $d$. From $\mathrm{Re}(z) = \frac{2r(d+1)(d-r) - d(d+2)(d-1)}{2r(d+1)(d-r)}$ we can substitute in $d = 2r-1, 2r, 2r+1$ to get respectively

\begin{align*} \mathrm{Re}(z) &= \frac{-(d+1)^2 + 2}{(d+1)^2} &= -1 + \frac{2}{(d+1)^2} \\
\mathrm{Re}(z) & = \frac{-d(d+1) +4}{d(d+1)}& = -1 + \frac{4}{d(d+1)}\\
\mathrm{Re}(z) & = \frac{-(d+1)^2 + 2}{(d+1)^2} & = -1 + \frac{2}{(d+1)^2}\end{align*}
where the final one can be calculated explicitly or simplified using the duality between $d = 2r-1$ and $d = 2r+1$. 
\end{remark}

Since the Bravyi-Smolin construction relies on the odd dimension of the space to show that there are no unitaries in the antisymmetric subspace, it makes sense to consider $d = 2k+1$; in this case to find equal-rank equiangular real projections to mimic the Bravyi-Smolin construction, we need the common rank to be $r=k$ or $r=k+1$, and from the duality relation of Lemma \ref{duality} it suffices to consider just the case $r=k$, and get $k+1 = r' = d - r$ from duality. In this case, we solve that $\mathrm{Re}(z) = -1 + \frac{2}{(d+1)^2}$ where $d = 2k+1$, so we have $\mathrm{Re}(z) = -1 + \frac{1}{2(k+1)^2}$. 

Thus, for the rest of the paper, we will consider the possibility of finding $\binom{d+1}{2}$ real, rank-$\frac{d-1}{2}$ projections in $M_d(\C)$ with $\mathrm{tr}(P_iP_j) = r\frac{dr + r - 2}{(d+2)(d-1)} = \frac{d^2 -5}{4(d+2)}$ for odd $d$. Remarkably, as we see below, there exist a set of $\binom{d+1}{2}$ real projections in dimensions $d=p\equiv -1\mod 8$ with these exact properties.  

\subsection*{A construction for $d = p \equiv -1 \mod 8$ by Calderbank, Hardin, Rains, Shor and Sloane}
A quick remark that in this section, we reindex our standard labelling for vectors so that vectors and matrices are now $0$-indexed: i.e., the standard basis vector $(1,0,\cdots)^T$ in $\C^d$ is now $e_0$, and the standard basis is thus $\{e_0,\cdots, e_{d-1}\}$. The same holds for entries of matrices: the entries of a matrix $H$ are $\{h_{ij}\}_{i,j=0}^{d-1}$. This is a notational convenience to match the construction we present below.

In \citep{shor-1}, a construction was given for obtaining mutually orthogonal real projections $P_i\in M_p(\C)$ of rank $\frac{p-1}{2}$ such that

$$\frac{p-1}{2} - \mathrm{tr}(P_iP_j) = \frac{(p+1)^2}{4(p+2)},$$ for $p = 3$, or $p$ a prime satisfying $p\equiv - 1 \mod 8$. Notice that with $d=p$, we have the correct rank $r=\frac{d-1}{2}$, and we can rearrange to see that 

$$\mathrm{tr}(P_iP_j) = \frac{2(p-1)(p+2)- (p+1)^2}{4(p+2)} = \frac{p^2 -5}{4(p+2)}$$ as we require for maximal, real equiangular projections. 
 
We outline the construction for completeness, which requires the existence of a Hadamard matrix $H$ of size $\frac{p+1}{2}$. Denote by $\mathcal{Q}=\{q_1,\cdots, q_{(p-1)/2}\}$ the non-zero quadratic residues modulo $p$, and by $\mathcal{R} = \{r_1,\cdots, r_{(p-1)/2}\}$ the non-residues. If $C = \frac{1+\sqrt{p+2}}{\sqrt{p+1}}$ then an orthogonal basis for the span of $P_t$ is given by the vectors 

$$e_{q_s} + h_{st} C e_{k q_s}$$ as $s,t$ vary as $0 \leq t \leq \frac{p-1}{2}$, $1\leq s\leq \frac{p-1}{2}$ and for some fixed $k\in \mathcal{R}$. Note that although these vectors are not normalized, they all have the same norm (as each has exactly two non-zero entries: a $1$ and a $\pm C$) and so they can be simultaneously normalized to an orthonormal basis. This yields $\frac{p+1}{2}$ projections each of dimension $\frac{p-1}{2}$. 

Further, each $P_i$ also yields $p-1$ more projections by cyclically permuting the entries of each vector; this gives $\frac{p(p+1)}{2}$ projections in total. 
See Theorem 3 in \citep{shor-1} for more details, or \citep{conway} for a specific example with $p=7$. Also see 
\citep{zhang} for some more generalizations on this construction.

Now that we know the existence of $\frac{p(p+1)}{2}$ real projections of rank $\frac{p-1}{2}$, for dimensions $d=p\equiv -1\mod 8$, we can obtain the required UMEB in this dimensions.
\begin{theorem} {\label{mainthm}}
For each $d = p \equiv -1 \mod 8$, there exists a UMEB of cardinality $d(d+1)/2$ and the corresponding unitary matrices span the symmetric subspace in $M_d(\C)$.
\end{theorem}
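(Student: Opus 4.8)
The plan is simply to assemble the ingredients already assembled above. By the construction of Calderbank, Hardin, Rains, Shor and Sloane recalled in the previous subsection, for $d = p \equiv -1 \bmod 8$ there is a family $\{P_i\}_{i=1}^{p(p+1)/2}$ of real, rank-$\frac{p-1}{2}$ projections in $M_p(\C)$ that are pairwise equiangular with $\mathrm{tr}(P_iP_j) = \frac{p^2-5}{4(p+2)}$. Since this is exactly the value $r\frac{rd+r-2}{(d+2)(d-1)}$ forced on a maximal family with $r = \frac{p-1}{2}$ and $d = p$, and since $\beta \neq r$ here so the linear-independence lemma applies, these $\binom{p+1}{2} = \dim S_p$ projections are linearly independent real symmetric matrices and therefore form a basis of the symmetric subspace $S_p$.

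Next I would invoke Theorem \ref{thm-equiangular} with $r = \frac{p-1}{2}$, $d = p$: here $d = 2r+1$, which lies in the admissible window $2r-1 \le d \le 2r+1$, so there is a unimodular $z$ — explicitly with $\mathrm{Re}(z) = -1 + \frac{2}{(p+1)^2}$, in particular $z \neq 1$ — for which the matrices $U_i := I_p - (1-z)P_i$ are unitary and satisfy $\mathrm{tr}(U_i^*U_j) = 0$ for $i \neq j$. Trace-orthogonality forces the $U_i$ to be linearly independent, and each $U_i$ lies in $S_p$ as an affine combination of $I_p$ and $P_i \in S_p$; hence the $\binom{p+1}{2}$ unitaries $U_i$ are again a basis of $S_p$.

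Finally I would run the Bravyi--Smolin endgame verbatim: the orthogonal complement of $\mathrm{span}\{U_1,\dots,U_{p(p+1)/2}\} = S_p$ inside $M_p(\C)$ is the antisymmetric subspace $A_p$, and $A_p$ contains no unitary, because $V^T = -V$ gives $\det V = \det V^T = (-1)^p \det V = -\det V$ as $p$ is odd, so $\det V = 0$, impossible for a unitary. Pulling back through the Choi--Jamiolkowski isomorphism (with the common rescaling by $\frac{1}{\sqrt p}$ that turns a trace-orthonormal family of unitaries into an orthonormal family of maximally entangled vectors), the $U_i$ are the image of a UMEB of cardinality $\frac{p(p+1)}{2}$ whose span is the symmetric subspace, as claimed.

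No step here is difficult once the pieces are lined up; the only points requiring care are (i) matching the inner product coming out of \citep{shor-1} to the maximality value $\frac{p^2-5}{4(p+2)}$ so that "linearly independent" upgrades to "spanning", and (ii) checking that $d = p$ genuinely falls in the interval of Theorem \ref{thm-equiangular} — it is precisely the boundary case $d = 2r+1$ — together with $z \neq 1$, so that passing from $P_i$ to $U_i = I_p - (1-z)P_i$ does not collapse the span. The one genuinely nontrivial input, the existence of the projections $P_i$ themselves, is imported wholesale from \citep{shor-1}.
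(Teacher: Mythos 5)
Your proposal is correct and follows essentially the same route as the paper's proof: import the Calderbank--Hardin--Rains--Shor--Sloane projections, verify they fall in the admissible window of Theorem \ref{thm-equiangular} (the case $d=2r+1$) to get trace-orthogonal unitaries $U_i = I_p-(1-z)P_i$, and conclude via the odd-dimension determinant argument that no unitary lies in the antisymmetric complement. The only difference is that you spell out details the paper leaves implicit (linear independence plus dimension count to get spanning of $S_p$, and the Choi--Jamiolkowski pullback), which is a faithful elaboration rather than a different argument.
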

\begin{proof} The projections constructed in the way  above are compatible with the construction from Bravyi and Smolin. In particular, following Theorem \ref{thm-equiangular} we can find a $z$, $|z|=1$ such that $U_t  = I_p - (1-z)P_t$ are a set of mutually orthogonal unitaries. Since $P_t$ span the symmetric subspace of $M_p(\C)$, any unitary orthogonal to all of the $U_t$ must itself be antisymmetric, and since $p$ is odd, this is impossible. Thus this construction yields a UMEB. 
\end{proof}
 
 We elucidate the construction of Calderbank, Hardin, Rains, Shor and Sloane using the example of $d = p = 7$. 

\begin{example}  The non-zero quadratic residues modulo $7$ are $\mathcal{Q} = \{1,2, 4\}$ and so the non-residues are $\mathcal{R} = \{3,5,6\}$. We have $C = \frac{1+\sqrt{p+2}}{\sqrt{p+1}} =\sqrt{2}$. 

We also need a $4\times 4$ Hadamard matrix, for example
\[H:=\begin{bmatrix} 1 & 1 \\ 1 & -1 \end{bmatrix}^{\otimes 2} = \begin{bmatrix} 1 & 1 & 1 & 1\\ 1 &-1 & 1 & - 1\\ 1 & 1 & -1 & -1 \\ 1 & -1 & -1 & 1 \end{bmatrix}.\]
We fix a $k\in \mathcal{R}$, say $k = 3$. For each $s = 1,2,3$ we have that $q_s = 1,2,4$ respectively, so $P_t$ is the span of the three vectors 

\[ e_{1} + C h_{1t} e_{3}, e_2 + Ch_{2t} e_{6}, e_4 + Ch_{3t} e_{5}\] for $t = 0,1,2,3$. 

So for example when $t=0$ we get that $P_0$ is the span of 

\begin{align*} e_1 + Ch_{10} e_3 &= \begin{pmatrix} 0 & 1 & 0 & C & 0 & 0 & 0 \end{pmatrix}^T \\
 e_2 + Ch_{20} e_6 & = \begin{pmatrix} 0 & 0 & 1 & 0 & 0 & 0 & C\end{pmatrix}^T \\
 e_4 + Ch_{30} e_5 & = \begin{pmatrix} 0 & 0 & 0 & 0 & 1 & C & 0 \end{pmatrix}^T\end{align*}
 
where we use the fact that $h_{10} = h_{20} = h_{30} = 1$ since the $0^{th}$ column of $H$ has all of its entries to be $1$, and we use modular arithmetic so that for example when $q_s = 4$, $k\times q_s = 3\times 4 \equiv 5 \mod 7$. 
 
Similarly, when $t = 1$ we get $P_1$ to be the span of the vectors

\begin{align*}e_1 + Ch_{11} e_3 & = \begin{pmatrix} 0 & 1 & 0 & -C & 0 & 0 & 0 \end{pmatrix}^T \\
e_2 + Ch_{21} e_6 & =\begin{pmatrix} 0 & 0 & 1 & 0 & 0 & 0 & C\end{pmatrix}^T \\
e_4 + Ch_{31} e_5 & = \begin{pmatrix} 0 &0 & 0 & 0 & 1 & -C & 0 \end{pmatrix}^T. \end{align*}

By doing the same for $t=2,3$ we get two more dimension-$3$ subspaces, and to obtain the remaining subspaces, take each $P_t$, and simultaneously cyclically permute the entries of each of the vectors used to define it. So for example, from $P_0$ we also obtain a subspace that is the span of 

\begin{align*} \begin{pmatrix} 0 & 0 & 1 & 0 & C & 0 & 0 \end{pmatrix}^T \\
\begin{pmatrix} C & 0 & 0 & 1 & 0 & 0 & 0\end{pmatrix}^T \\
\begin{pmatrix} 0 & 0 & 0 & 0 & 0 & 1 & C \end{pmatrix}^T
\end{align*} 
by shifting each entry in each vector forward by $1$, and continuing through all $6$ cyclic shifts. Thus, each of the four subspaces $P_0,P_1,P_2,P_3$ is part of a family each of $7$ members (itself and its six cyclic shifts), for a total of $28$ rank-$3$ projections. 

To normalize each vector, we divide by its norm of $\sqrt{1 + \sqrt{2}^2} = \sqrt{3}$ so one may confirm by calculation that for any pair of distinct projections, the trace of the product is 

$$\frac{2(1-C^2)^2 + (1+C^2)^2}{9} = \frac{2(-1)^2 + (1+2)^2}{9} = \frac{11}{9} = \frac{p^2-5}{4(p+2)}$$ as required.
 
\end{example}

\begin{remark} It is worth commenting on the combinatorial interpretation of the construction in \citep{shor-1}. As per \citep{zhang}, the use of the Hadamard matrix relies on the existence of a corresponding $2$-design, or, equivalently, the difference set underlying it. Given any Hadamard matrix of order $\frac{p-1}{2}$, there exists a difference set in the additive group of integers modulo $p$: $D = \{d_1,\cdots, d_{\frac{p-1}{2}}\}$, such that $(-1)D = \{-d_1,\cdots, -d_{\frac{p-1}{2}}\}$ is also a difference set. The vectors in the span of the projections all have the form 

$$e_{d_i + x} + h_{ij}C e_{-d_i + x}$$ for some $x$; $D+x$ and $-D + x$ are lines of two block-designs, and it is the intersection properties of the blocks in these designs that yield equiangularity. 

It is well-known that in the case of rank-one projections, i.e. equiangular lines, examples can be built both from strongly regular graphs with certain appropriate parameters, and from block-designs; e.g. there are $28$ real equiangular lines in $\R^7$ whose pattern of non-zero entries is determined by the Fano plane. It is interesting to explore what other combinatorial data can give rise to equiangular projections, comparable to the case of equiangular lines.
\end{remark}

\begin{remark} The construction from \citep{shor-1} works for $d=7$, $d=23$, the two other dimensions for which a maximal set of real equiangular lines exists--even though the equiangular lines themselves are not amenable to the Bravyi-Smolin's construction. 
\end{remark}

\subsection{On the mixed unitary rank of the symmetric Werner-Holevo channel:}
Note that a quantum channel $\Phi:M_d(\C)\rightarrow M_d(\C)$ is called mixed-unitary if there exists a $r>1$, a probability vector $(p_1,\cdots, p_r)$ and unitary matrices $U_1,\cdots, U_r$ in $M_d(\C)$ such that  for all $X\in M_d(\C)$ we have 
\[\Phi(X)=\sum_{j=1}^r p_j U_jXU_j^*\]
The \textit{mixed-unitary rank} (see \citep{girard}) of a mixed-unitary channel is the minimum number of distinct unitary conjugations needed for the expression above. 

The symmetric Werner-Holevo channel ($WH_+$)  is defined on $M_d(\C)$ as follows 
\[WH_+(X)=\frac{Tr(X)1+X^T}{d+1},\]
where $X^T$ is the transpose of $X$. It is known that this channel is mixed unitary for all $d$. It was shown in \citep{girard} that the mixed-unitary rank of this channel is equal to $d(d+1)/2$ for even $d$ and they conjectured, based on numerical evidence, that the mixed unitary rank of this channel remains to be $d(d+1)/2$ even when $d$ is odd. Following Theorem 19 in \citep{girard}, it is evident that the mixed unitary rank of $WH_+$ is $d(d+1)/2$ if and only if there is an orthonormal unitary basis of the symmetric subspace. Since the unitaries that give rise to UMEBs in Theorem \ref{mainthm} constitute a unitary basis of the symmetric subspace we prove the following result which provide further evidence for the conjecture mentioned in \citep{girard}.
\begin{corollary}
The symmetric Werner-Holevo channel $WH_+(X)=\frac{Tr(X)1+X^T}{d+1}$ on $M_d(\C)$ has mixed unitary rank $d(d+1)/2$ for $d = p \equiv -1 \mod 8$.
\end{corollary}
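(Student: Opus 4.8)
The plan is to deduce the corollary directly from Theorem~\ref{mainthm} together with the characterization of the mixed-unitary rank of $WH_+$ recalled just above its statement (Theorem~19 of \citep{girard}). First I would dispose of the lower bound, which is valid in every dimension: a mixed-unitary representation $\Phi(X)=\sum_j p_j U_j X U_j^*$ is in particular a Kraus decomposition, with Kraus operators $\sqrt{p_j}\,U_j$, so its length is at least the Choi rank of $\Phi$; and the Choi matrix of $WH_+$ is $(\mathrm{id}\otimes WH_+)\bigl(\sum_{i,k}E_{ik}\otimes E_{ik}\bigr)=\tfrac{1}{d+1}\bigl(I\otimes I+\mathrm{SWAP}\bigr)=\tfrac{2}{d+1}\,\Pi_{\mathrm{sym}}$, a scalar multiple of the projection onto the symmetric subspace of $\C^d\otimes\C^d$, hence of rank $d(d+1)/2$. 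So $WH_+$ always has mixed-unitary rank at least $d(d+1)/2$.

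For the matching upper bound when $d=p\equiv-1\bmod 8$, I would invoke Theorem~\ref{mainthm}: it produces $N:=d(d+1)/2$ unitaries $U_t=I_d-(1-z)P_t$ that are mutually trace-orthogonal and whose span is all of the symmetric subspace $S_d\subseteq M_d(\C)$. Since $\dim S_d=N$ and orthogonality forces linear independence, $\{U_t\}_{t=1}^N$ is, after the common rescaling by $1/\sqrt d$, a trace-orthonormal basis of $S_d$; equivalently (via the Choi--Jamiolkowski isomorphism) it is the image of a UMEB of cardinality $N$ whose span is exactly $S_d$. Feeding this into Theorem~19 of \citep{girard} gives mixed-unitary rank $\le N$. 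Concretely, since each $U_t$ is a complex-symmetric matrix, the vectors $(I\otimes U_t)\sum_i e_i\otimes e_i$ lie in the symmetric subspace of $\C^d\otimes\C^d$, are pairwise orthogonal, and number $N=\operatorname{rank}\Pi_{\mathrm{sym}}$; hence the uniform mixture $X\mapsto \tfrac{2}{d(d+1)}\sum_{t=1}^N U_t X U_t^*$ has Choi matrix $\tfrac{2}{d+1}\,\Pi_{\mathrm{sym}}$, that is, it equals $WH_+$, realizing the channel with exactly $N$ unitaries. Combined with the lower bound, the mixed-unitary rank is exactly $d(d+1)/2$.

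The genuinely substantive input is entirely borrowed: it is Theorem~\ref{mainthm} (and behind it Theorem~\ref{thm-equiangular} and the Calderbank--Hardin--Rains--Shor--Sloane packing) that supplies the spanning family of trace-orthogonal unitaries for $S_d$, which is precisely the obstruction isolated by \citep{girard}. The only points needing care are bookkeeping ones: that $\{U_t\}$ spans all of $S_d$ rather than a proper subspace --- immediate, because $P\mapsto I_d-(1-z)P$ is an invertible affine map and $I_d\in S_d$, so it carries the spanning set $\{P_t\}$ to another spanning set --- and the choice of Choi convention, which must be fixed so that the \emph{symmetric} projector $\Pi_{\mathrm{sym}}$ corresponds to $WH_+$ rather than to $WH_-$ (the antisymmetric channel, whose associated subspace carries no unitaries in odd dimension at all, matching the Bravyi--Smolin argument). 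I expect this convention check to be the only place where one must slow down; everything else is a one-line application of the two cited results.
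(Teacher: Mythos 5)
Your proof is correct and follows the same route as the paper: the corollary is deduced by combining Theorem \ref{mainthm} with the characterization of the mixed-unitary rank of $WH_+$ in Theorem 19 of \citep{girard}, which is exactly the paper's one-line proof. The only difference is that you also unpack that citation---checking the Choi-rank lower bound via the Choi matrix $\tfrac{2}{d+1}\Pi_{\mathrm{sym}}$ and exhibiting the uniform mixture over the $d(d+1)/2$ trace-orthogonal symmetric unitaries as an explicit realization of $WH_+$---which makes your version self-contained but does not change the underlying argument.
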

\begin{proof}
This follows from the Theorem 19 in \citep{girard} and Theorem \ref{mainthm}.
\end{proof}

\section{Conclusion}
In quantum information processing, quantum measurements based on mutually unbiased bases have been proved to be very useful. It turns out (see \citep{MUBs}) that certain UMEBs can be turned into such mutually unbiased bases. Hence finding new examples of UMEBs remains an interesting problem in this field. In this work, we have introduced a new approach to obtain examples of UMEBs in infinitely many dimensions of the underlying system. While the equianguar lines have been very useful in many aspects of quantum computing (\citep{SIC-1},\citep{SIC-2}), the parallel theory of equiangular subspaces relating to quantum computational tasks is not well developed. We demonstrate that optimal equiangular subspaces in certain dimensions are very useful in finding new examples of UMEBs. It has been observed recently that such special subspaces can be used in the framework of quantum error correction (see \citep{isoclinic}). It will be interesting to see how our techniques used here could provide stronger connection between UMEBs and quantum error correction. We also showed that the mixed-unitary-rank conjecture of the symmetric Werner-Holevo channel holds true in infinitely many dimensions. It will be interesting to see how a more general connection between UMEBs and mixed-unitary rank of an arbitrary mixed-unitary channel can be formulated.  
\bibliography{UMEBs}
\bibliographystyle{amsplain}

\end{document}